\documentclass[runningheads]{llncs}

\usepackage{cite} 

\usepackage{multirow}

\usepackage{adjustbox}
\usepackage{amsmath}
\usepackage{amssymb}
\usepackage{framed}
\usepackage{booktabs} 
\usepackage[table]{xcolor}

\usepackage{standalone}
\usepackage{pgfplots}
\usepgfplotslibrary{groupplots}

\usepackage{xspace}
\usepackage{placeins} 
\usepackage{relsize}
\usepackage[inline]{enumitem} 
  \newenvironment{inumerate}{\begin{enumerate*}[label=(\roman*),itemjoin=\hspace{3pt}]}{\end{enumerate*}}
	
\usepackage{csquotes} 
\usepackage{enumitem}
\usepackage{mathtools}
\usepackage[draft]{todonotes}
\usepackage{booktabs}
\usepackage{graphicx}
\usepackage{pifont}
\usepackage{siunitx}

\usepackage{array} 
\newcolumntype{C}[1]{>{\centering\arraybackslash\hspace{0pt}}p{#1}}
\newcolumntype{R}[1]{>{\raggedleft\arraybackslash\hspace{0pt}}p{#1}}

\PassOptionsToPackage{hyphens}{url}
\usepackage[
    pdftex,
    pdftitle={Topology-Hiding Path Validation for Large-Scale Quantum Key Distribution Networks},
    pdfauthor={Stephan Krenn, Thomas Lorünser, Omid Mir, Sebastian Ramacher, Florian Wohner},
pdfpagelabels=true, 
    bookmarks,bookmarksopen,bookmarksdepth=3,
    breaklinks,colorlinks,citecolor=blue,linkcolor=black,urlcolor=blue
   ]{hyperref}

\usepackage[capitalize,nameinlink,sort]{cleveref}

\usepackage{float}
\floatstyle{boxed}
\newfloat{construction}{ht}{lo}
\floatname{construction}{Construction}
\crefname{construction}{Construction}{Constructions}

\usepackage{etoolbox}
\makeatletter
\patchcmd{\thebibliography}{\section{\refname}}{\section*{\refname}}{}{}
\patchcmd{\thebibliography}{\section{\bibname}}{\section*{\bibname}}{}{}
\makeatother

 \graphicspath{{figures/}}

\newcommand{\comment}[1]{\raisebox{.25ex}{{\color{gray} \text{\smaller\smaller\tt//#1}}}}

\newcommand{\negl}{\mathsf{negl}}

\newcommand{\node}{\mathtt{N}}
\newcommand{\issuer}{\mathtt{I}}
\newcommand{\sid}{\mathtt{sid}}
\newcommand{\did}{\mathtt{did}}

\newcommand{\nym}{\mathsf{nym}}
\newcommand{\pk}{\mathsf{pk}}
\newcommand{\sk}{\mathsf{sk}}
\newcommand{\msg}{\mathsf{msg}}
\newcommand{\state}{\mathsf{st}}
\newcommand{\cred}{\mathsf{cred}}
\newcommand{\scope}{\mathsf{scope}}
\newcommand{\ctx}{\mathsf{ctx}}
\newcommand{\attr}{a}
\newcommand{\predicate}{\phi}

\newcommand{\adv}{\mathcal{A}}
\newcommand{\oracle}{\mathcal{O}}
\newcommand{\setS}{\mathcal{S}}
\newcommand{\network}{\mathbf{G}}

\newcommand{\getsr}{\gets_{\$}}

\newcommand{\zo}{\{0,1\}}

\newcommand{\GG}{\mathbb{G}}
\newcommand{\ZZ}{\mathbb{Z}}

\newcommand{\secpar}{\lambda}
\newcommand{\pp}{pp}

\newcommand{\Sig}{\Sigma}
\newcommand{\Nym}{\Psi}

\newcommand{\alg}[1]{\mathsf{#1}}
\newcommand{\NIZK}{\alg{NIZK}}
\newcommand{\hash}{\alg{H}}

\newcommand{\NymGen}{\alg{NymGen}}
\newcommand{\Register}{\alg{Register}}
\newcommand{\Transfer}{\alg{Send}}
\newcommand{\Alice}{\alg{A}}
\newcommand{\Bob}{\alg{B}}

\newcommand{\Groth}{\alg{Groth}}
\newcommand{\ParGen}{\alg{ParGen}}
\newcommand{\KeyGen}{\alg{KeyGen}}
\newcommand{\Sign}{\alg{Sign}}
\newcommand{\Rand}{\alg{Rand}}
\newcommand{\Verify}{\alg{Verify}}
\newcommand{\SENym}{\alg{Nym}}

\makeatletter
\AtBeginDocument{%
	\def\doi#1{\url{https://doi.org/#1}}}
\makeatother

\begin{document}

\title{Topology-Hiding Path Validation for Large-Scale Quantum Key Distribution Networks}
\titlerunning{Topology-Hiding Path Validation for Large-Scale QKD Networks}

\author{Stephan Krenn\inst{1} \and 
	    Omid Mir\inst{1} \and
	    Thomas Lorünser\inst{1}\textsuperscript{,}\inst{2} \and \\
	    Sebastian Ramacher\inst{1} \and 
	    Florian Wohner\inst{1}}
\authorrunning{S. Krenn et al.}
\institute{AIT Austrian Institute of Technology, Vienna, Austria \\
           \email{\href{mailto:stephan.krenn@ait.ac.at,omid.mir@ait.ac.at,thomas.loruenser@ait.ac.at,sebastian.ramacher@ait.ac.at,florian.wohner@ait.ac.at?subject=Question regarding your paper on topology validation in QKD networks }{\{firstname.lastname\}@ait.ac.at}}
\and
Digital Factory Vorarlberg GmbH, Dornbirn, Austria
}

\maketitle              

\begingroup
\renewcommand\thefootnote{}
\footnotetext{This is the author-accepted manuscript of a paper accepted for publication in the proceedings of the 24th International Conference on Applied Cryptography and Network Security (ACNS 2026), to appear in Springer LNCS.}
\endgroup

\begin{abstract}
  Secure long-distance communication in quantum key distribution (QKD) networks depends on trusted repeater nodes along the entire transmission path. Consequently, these nodes will be subject to strict auditing and certification in future large-scale QKD deployments.
  However, trust must also extend to the network operator, who is responsible for fulfilling contractual obligations--such as ensuring certified devices are used and transmission paths remain disjoint where required.
  
  In this work, we present a path validation protocol specifically designed for QKD networks. It enables the receiver to verify compliance with agreed-upon policies. At the same time, the protocol preserves the operator’s confidentiality by ensuring that no sensitive information about the network topology is revealed to users.
  
  We provide a formal model and a provably secure generic construction of the protocol, along with a concrete instantiation. For long-distance communication involving $100$ nodes, the protocol has a computational cost of $1-2.5$\,s depending on the machine, and a communication overhead of less than $70$\,kB--demonstrating the efficiency of our approach.

\keywords{Path validation \and Topology-hiding \and QKD networks}
\end{abstract} 

\section{Introduction}\label{sec:intro}
  Quantum Key Distribution (QKD)~\cite{C:BenBra84} enables secure key exchange by utilizing quantum mechanics, specifically superposition and entanglement, to ensure that any eavesdropping attempt can be detected. 
  Unlike classical cryptographic methods, which rely on the difficulty of certain mathematical problems, QKD is based on the laws of physics and achieves unconditional security, making it resistant against classical and quantum attacks.
  QKD is thus expected to play a crucial role in highly security-sensitive application domains in the future, and its practicability has been demonstrated in numerous testbeds, e.g.,~\cite{DBLP:journals/entropy/BrauerVBMBGRBFPPLMB24,DBLP:conf/eicc/HenrichHSS23,DBLP:conf/icfsp/GeitzDB23,DBLP:conf/ecoc/KutscheraDBLMVH21}.
  
  However, QKD faces limitations, primarily the distance over which quantum signals can travel due to photon loss in optical fibers or through free space. 
  To overcome this limitation, trusted repeaters are employed to relay quantum information over long distances~\cite{DBLP:journals/csur/MehicNRMPAMSPPV20,DBLP:journals/corr/abs-2210-01636}, typically placed at most every  $100-200$\,km, leading to potentially tens of repeaters in a terrestrial long-distance connection. 
  These repeaters allow the key exchange to continue securely, but they also introduce vulnerabilities, as a compromised repeater could potentially intercept or alter the key, thereby undermining the security guarantees of the entire network.
  Consequently, minimizing the necessary trust required for trusted repeaters remains an active area of research to ensure the scalability and robustness of QKD systems.
  One such approach is based on so-called \emph{quantum repeaters}, which aim to enable secure long-distance quantum communication without the need for trusted intermediate nodes, leveraging quantum entanglement~\cite{quantum-repeaters}.
  However, as of today, quantum repeaters are still in the early experimental stage with significant challenges remaining, including the development of efficient quantum memories and the integration of all required components into a stable and scalable system.

  Another approach to reduce the necessary trust in the repeater nodes is to use multi-path QKD \cite{Liu_Che_Xie_Dong_2024,DBLP:conf/IEEEares/ValbusaLSL25}, which distributes quantum key material simultaneously over multiple independent paths in a network, increasing resilience against node or link failures and reducing the reliance on any single trusted repeater. 
  By combining the partial keys received via different paths--typically using information-theoretic techniques such as secret sharing \cite{Shamir79,secret-sharing-book}--the communicating parties can reconstruct a final key with improved security guarantees.
  Specifically, when using $k$-out-of-$t$ secret sharing for some $k<t$, one additionally obtains robustness guarantees: while the transferred data remains secure as long as at most $k-1$ paths contain a corrupted node, a joint key can be successfully established as long as at no more than $t-k$ paths refuse to relay data.

  However, although this approach mitigates the risk of a single compromised repeater, it has a fundamental limitation: 
  it requires the QKD network operator to be trusted to faithfully transmit the partial keys over pairwise disjoint paths. 
  This is because the security guarantees rely critically on the assumption that no individual node has access to more than one partial key. 
  Given that in the long term it can be expected that QKD networks will be operated by private entities such as telecommunications providers, and the stringent security requirements in scenarios like inter- and intra-governmental communication, relying solely on contractual instruments such as service level agreements (SLAs) may pose unacceptable risks.
  
  Therefore, technical and auditable measures are necessary to ensure that contractual agreements have been satisfied, e.g., verifying path disjointness, the certification level of deployed repeater nodes, or the host countries through which the communication path has been routed, without revealing sensitive business information about the topology of the underlying QKD network. 

\subsection{Our Contribution}
  In this paper, we propose a construction for proving that contractual agreements have been satisfied in a transmission over a QKD network. In particular, our construction is powerful enough to model the aforementioned use-cases.
  That is, it can be used to
  \begin{inumerate}
	\item ensure that the secret was transmitted over at least $k$ pairwise disjoint paths through the QKD network, and
	\item guarantee that only devices satisfying certain attributes (e.g., certification levels, manufacturers, etc.) were used during transmission, where
	\item the devices may have been certified by multiple authorities (differing, e.g., per country), 
	\item without leaking sensitive information about the network topology to the user of the network.
  \end{inumerate}
  
  More specifically, we present a modular extension applicable to arbitrary QKD networks, introducing only a small computational and communication overhead per repeater node and also at the receiver's end.
  Assuming that devices are certified before being deployed in practice--a necessary yet realistic assumption in high-security contexts which are subject to strict security controls--our construction can be used to audit that a QKD transmission has only passed through repeater nodes satisfying a previously defined policy;
  for concreteness, one might think of policies defining minimum certification levels for highly sensitive transactions, or trusted manufacturers or countries of origin of repeaters.
  At the same time, by using appropriate privacy-enhancing technologies (PETs), the receiving party only learns information about the number of repeaters on the path, which can anyhow be estimated based on the maximum distance of practical QKD networks.
  However, no further information about the topology is leaked.
  In particular, it remains hidden whether a specific node participated in different communication sessions, or to which other nodes it might be connected.
  
  To show the soundness of our approach, we provide a formal definitional framework capturing the intended security guarantees, and provide rigorous security proofs.
  Furthermore, we underpin the practicability of our solution by a detailed efficiency evaluation and benchmarks.
  
  \subsection{Technical Overview}
  In the following, we briefly sketch the core technical ideas underlying our construction, omitting specific details which will be discussed in the full construction. 
  While being conceptually easy to grasp, we want to stress that the benefit and added value of our contribution can significantly contribute to minimizing trust requirements in providers of QKD networks.
  
  Before deployment, each repeater node generates a local key pair $(\sk_\node,\pk_\node)$, and receives a certificate $\cred$ in the form of a digital signature on its attributes and its public key $\pk_\node$.
  
  When sending a message over a single path, the sender defines a policy $\predicate$ and chooses a unique session id $\sid$, which it transmits to the first node in the path.
  The node computes a non-interactive zero-knowledge proof of knowledge ($\NIZK$) $\pi_1$ that it owns a credential $\cred$ satisfying the policy $\predicate$, and binds it to the session id $\sid$.
  It then forwards $(\pi_1,\sid,\predicate)$ to the next node, which proceeds in a similar manner and sends $((\pi_1,\pi_2),\sid,\predicate)$ to the next node.
  Eventually, the receiver verifies the correctness of all received $\NIZK$s, thereby receiving guarantees that all nodes on the path satisfied the policy $\predicate$.
  
  Now, in a multi-path setting, each node on each path additionally computes a pseudonym $\nym$ for $\sid$, using a scope-exclusive pseudonym scheme.
  It then extends the $\NIZK$ to show that $\nym$ was derived from the secret key $\sk_\node$ underlying $\pk_\node$, which in turn was included in the credential $\cred$, without disclosing $\pk_\node$ to the verifier.
  At the end of the transmission, the receiver now checks that all received pseudonyms are pairwise distinct, thus receiving guarantees that no node was involved in more than one path.
  
  When aiming for trans-national QKD networks, nodes may be certified by different authorities (e.g., per country).
  As the verification of the $\NIZK$ requires access to the corresponding authority's public key $\pk_\issuer$, the receiver would learn information about the path.
  This can be overcome by leveraging the idea of issuer-hiding attribute-based credentials~\cite{CANS:BEKRS21}, which allow one to check that only accepted authorities issued the certificates while hiding the precise issuer.

\subsection{Related Work}
  While QKD itself is a highly active research area and a significant body of work aims at overcoming trust assumptions using, e.g., multi-path communication~\cite{Liu_Che_Xie_Dong_2024,DBLP:conf/IEEEares/ValbusaLSL25}, or at outsourcing computationally expensive parts of the post-processing~\cite{DBLP:journals/entropy/LorunserKPS23}, only limited work has focused on cryptographically auditing the behavior of network providers and nodes.
  As an example, Franzoi et al.~\cite{qcnc25} recently presented a protocol allowing one to identify the inconsistent link in cases where the integrity of the transferred secret was broken, i.e., that Alice and Bob did not obtain identical secret due to misbehavior of some repeater node. 
  Concurrent to our work, Cozzolino et al.~\cite{CozKreLor25} presented a protocol for showing the availability of QKD-links between two nodes in an inter-network scenario based on graph signatures, keeping the topology of the underlying networks secret.
  This is complementary to our effort in the sense that they prove availability of a (multi-path) connection, while we provide evidence that a used path satisfied certain constraints.
  
  Regarding mechanisms for \emph{path validation and path verification} for a next-generation Internet---e.g., to ensure that packages are routed via a superior (e.g., faster) network path as defined in the SLAs---have been researched, e.g, in~\cite{DBLP:conf/sigcomm/KimBJLHP14,DBLP:conf/conext/NaousWNMMS11,DBLP:journals/toit/SenguptaLBD20,DBLP:journals/tdsc/LiSLSMS25}.
  However, most importantly, these works all assume that the sender is aware of the entire path through the network, which is not applicable in our setting.
  Also, multi-path communication or complex policies $\predicate$ are not considered in these works, while the former could be easily achieved by letting the sender define disjoint transmission paths.
  At a more detailed level, there are trade-offs compared to our work in terms of privacy:
  for instance, \cite{DBLP:journals/toit/SenguptaLBD20} does not reveal the index (i.e., position on the path) of a node to that node, but discloses the length of the entire path to all nodes.
  In contrast, while revealing the index, we hide the path length to all but the last node in the communication.
  Notably, \cite{DBLP:journals/tdsc/LiSLSMS25} achieves a constant size proof, yet also relying on the knowledge of the entire path.
  While all the aforementioned work focuses on a single transmission path, Sengupta~\cite{DBLP:journals/cn/Sengupta22} lets senders define multiple valid paths, and the forwarding logic ensures that one of those paths was indeed followed.
  Note that this is different from our notion of multi-paths, which ensures that different parts of the data are sent over multiple, disjoint paths through the network.
  Summing up, while constituting a large body of work, existing path validation mechanisms are insufficient for our application, mainly because they assume that the sender is aware of the paths in the communication network, which is considered sensitive information in our scenario.

  Complementary to this, \emph{topology-hiding computation and communication} were introduced as privacy notions for running distributed protocols over an incomplete network while revealing essentially nothing about the underlying communication graph. 
  Moran, Orlov, and Richelson initiate topology-hiding computation and establish feasibility and limitations in early settings \cite{TCC:MorOrlRic15}. 
  Their work was extended to a general tool for higher-level protocols \cite{C:HMTZ16}, and extended to topology-hiding computation over all graph topologies \cite{C:AkaLaVMor17}.
  Later works refine the model toward stronger adversaries and more realistic timing assumptions, by pushing boundaries beyond semi-honest security \cite{TCC:LZMMMT18} and to networks with unknown delays \cite{PKC:LZMMMT20}. 
  Complementarily, Ball et al. investigate topology-hiding communication from the viewpoint of minimal setup and assumptions required to realize it \cite{TCC:BBCKMM20}.
  While all these works are closely related to our ambition of hiding the network structure while still carrying out nontrivial distributed tasks correctly (and in some cases even robustly), they pursue a different goal than ours: 
  they aim to realize generic topology-hiding communication or computation among protocol participants, and their guarantees are tied to the correctness and security of that interactive execution. 
  They do not provide a post-hoc, transferable audit guarantee for the receiver, namely an externally verifiable proof that the specific realized route satisfied policy constraints such as certified node properties and disjointness, which is what our approach is designed to deliver.
  
  In summary, our work can be viewed as bridging topology-hiding protocols with path validation/verification: 
  we adopt the topology-hiding objective to protect the operator's internal knowledge while importing the path-validation goal to provide externally verifiable evidence on the realized route's properties. 
  
\subsection{Outline}
  This document is structured as follows.
  In \cref{sec:preliminaries} we introduce the basic notation used in this paper, and recap the cryptographic building blocks used later on.
  Then, in \cref{sec:framework}, we formalize the syntax and security requirements for an auditable QKD protocol.
  We then introduce a generic construction achieving these properties in \cref{sec:generic_construction}, where we also provide rigorous security proofs.
  A specific instantiation is provided in \cref{sec:concrete_construction}, together with an efficiency analysis.
  Finally, we briefly conclude in \cref{sec:conclusion}.

\section{Preliminaries}\label{sec:preliminaries}
  Throughout the paper, we will denote the main security parameter by $\lambda$.
  We write $s\getsr S$ to denote that $s$ was sampled uniformly at random from a set $S$.
  Similarly, $s\getsr\alg{A}(x)$ denotes that $s$ is the output of a potentially randomized algorithm $\alg{A}$ on input $x$.
  For an interactive protocol between two parties $\alg{A}$ and $\alg{B}$ we write $(out_{\alg{A}};out_{\alg{B}})\getsr\langle\alg{A}(in_{\alg{A}}),\alg{B}(in_{\alg{B}})\rangle(in)$ to denote that $\alg{A}$ and $\alg{B}$ receive the corresponding outputs $out_{\alg{A}}$ and $out_{\alg{B}}$ upon executing the protocol on private inputs $in_{\alg{A}}$ and $in_{\alg{B}}$, as well as common input $in$.
  For an integer $n$, the set $\{1,\dots,n\}$ is denoted by $[n]$.
  A function $\negl$ is called negligible, if it vanishes faster than any inverse polynomial, i.e., for every integer $j$ there exists an integer $n_j$ such that $\negl(n)<n^{-j}$ for all $n>n_j$.
   We use $(\GG_1,\GG_2,\allowbreak\GG_T,e,p,G,\hat{G})$ to denote a bilinear group for asymmetric type-3 bilinear groups, where $p$ is a prime of bit length $\lambda$.

\subsection{Structure-Preserving Signatures}\label{sec:signatures}
  Digital signature schemes can be used to ensure the integrity, authenticity, and non-repudiation of digital messages.
  They consist of a quadruple of algorithms $(\ParGen,\KeyGen,\Sign,\Verify)$, where $\ParGen(1^\secpar)$ outputs system parameters $\pp$, $\KeyGen(\pp)$ outputs a key pair consisting of a secret signing key $\sk$ as well as a public verification key $\pk$, $\Sign(\pp,\sk,\msg)$ uses $\sk$ to derive signatures on messages $\msg$, and $\Verify(\pp,\pk,\sigma,\msg)¸$ checks the validity of a signature on a message relative to $\pk$.
  
  Digital signature schemes must satisfy EUF-CMA (existential unforgeability under chosen-message attacks), first formalized by Goldwasser et al.~\cite{GolMicRiv88}.
  This property requires that no PPT adversary, given only $\pk$, can compute a valid signature on a new message, even after having obtained valid signatures on arbitrarily many signatures of its own choice from a signing oracle.

  Structure-preserving signatures, first introduced by Abe et al.~\cite{C:AFGHO10}, are digital signatures where verification keys, messages, and signatures are elements of a bilinear group, and verification is done by checking a conjunction of pairing-product equations.

\subsection{Pseudonym Systems}\label{sec:pseudonyms}
  Pseudonym systems were first introduced by Chaum~\cite{DBLP:journals/cacm/Chaum85}, and later formalized in a series of work, e.g., \cite{SAC:LRSW99,SAC:CKLMNP15}.
  Such systems let users interact or authenticate using persistent aliases instead of their real identities, enhancing privacy and security.
  Specifically, in the case of scope-exclusive pseudonyms, pseudonyms are stable only within a given scope, while they cannot be linked across different scopes, even if derived from the same secret key.
  
  Pseudonym systems consist of three algorithms $(\ParGen,\KeyGen,\allowbreak\NymGen)$, where $\ParGen(1^\secpar)$ generates public parameters $\pp$, $\KeyGen(\pp)$ generates a user's secret key $\sk$, and the deterministic algorithm $\NymGen(\pp,\sk,\scope)$ computes a pseudonym $\nym$ for a given $\sk$ and a given $\scope$.

  Pseudonym systems need to be collision resistant, meaning that for each scope, any two users will have different pseudonyms.
  Furthermore, they have to be unlinkable, intuitively meaning that no PPT adversary can decide to which user a specific $\nym$ belongs, even after having received arbitrarily many pseudonyms of users on scopes chosen by the adversary.

\subsection{Zero-Knowledge Proofs of Knowledge}\label{sec:nizks}
  A zero-knowledge proof of knowledge (ZKP) is a protocol where a prover convinces a verifier that they know a piece of information without revealing the information about the secret beyond what is already revealed by the claim itself. 
  Unlike a basic zero-knowledge proof, which only demonstrates the existence of a fact, a ZKPoK proves that the prover possesses the knowledge of a secret.
  A ZKP is called non-interactive (NIZK), if the protocol consists of a single message sent from the prover to the verifier.
  
  A NIZK consists of a triple of algorithms $(\ParGen,\NIZK,\Verify)$, where $\ParGen$ generates the necessary system parameters, $\NIZK$ generates a non-interactive zero-knowledge proof of knowledge that the prover knows a secret witness $w$ such that $(x,w)\in\mathcal{R}$ for a binary relation $\mathcal{R}$ and a public value $x$, and $\Verify$ indicates whether to accept or to reject the proof.
  
  A NIZK has to be complete, meaning that an honest prover knowing a valid witness can always convince the verifier.
  Furthermore, zero-knowledge means that not knowing $w$ but knowing a simulation trapdoor, one can generate transcripts which are indistinguishable from real protocol transcripts.
  Finally, simulation sound extractability means that from any prover who can make the verifier accept, a valid witness can be extracted, even if the prover has seen simulated proofs for potentially false statements.
  For formal definitions, we refer, e.g., to Goldwasser et al.~\cite{STOC:GolMicRac85,AC:Groth06}.

  Slightly overloading notation, we will use Camenisch-Stadler notation~\cite{C:CamSta97} to denote proof goals.
  We write:
  $$
    \pi\getsr\NIZK\left[(\alpha,\beta,\gamma):Y=G^\alpha\cdot H^\beta ~\land~ Z=G^\alpha\cdot H^\gamma ~\land~ \gamma=\alpha\cdot\beta\right](\ctx)
  $$
  to prove knowledge of values $\alpha,\beta,\gamma$ such that the relation on the right hand side are satisfied;
  in particular, all values not in parentheses are assumed to be public.
  Furthermore, we often bind the proof to a context $\ctx$ in the sense of a signature of knowledge, for formal definition see, e.g.,~\cite{C:ChaLys06,AC:BCKLN14}.
  
\section{Auditable QKD Framework}\label{sec:framework}
In this section, we present a novel, privacy-preserving, and auditable protocol framework tailored to the requirements of large-scale QKD networks. Here we first formalize the notation and define the required security properties for auditable QKD protocol extensions.
\subsection{Syntax}
\begin{definition}[Auditable QKD]
	An auditing extension for a QKD protocol consists of the following algorithms:
\end{definition}
\begin{description}
	\item[$\pp\getsr\ParGen(1^\secpar)$.]
	  On input the security parameter $\lambda$ in unary representation, output the public parameters $\pp$.
	  These parameters are implicit input to all further algorithms, and might not be made explicit to ease presentation.
	\item[$(\sk_\issuer,\pk_\issuer)\getsr\KeyGen_\issuer(\pp)$.] 
	  This algorithm generates a secret key $\sk_\issuer$ as well as a public key $\pk_\issuer$ for an issuer.
	\item[$(\sk_\node,\pk_\node)\getsr\KeyGen_\node(\pp)$.]
	  This algorithm generates a secret key $\sk_\node$ as well as a public key $\pk_\node$ for a repeater node.
	\item[$(\cred;\bot)\getsr\langle \Register_\issuer(\sk_\issuer,\pk_\node),\Register_\node(\sk_\node,\pk_\issuer)\rangle(\vec{\attr})$.]
	  In this interactive protocol between an issuer and a repeater node, each party takes as input its own secret key and the other entity's public key, as well as previously agreed attributes $\vec{\attr}$ of a node.
	  At the end of the protocol, the node receives a certificate $\cred$ certifying its public key $\pk_\node$ and the attributes $\vec{\attr}$.
	\item[$(\bot;\{\bot\};n')\getsr\langle\Transfer_\Alice(),\{(\Transfer_\node(\sk_{\node_{ij}},\cred_{\node_{ij}},\vec{\attr}_{\node_{ij}})_{j=1}^{m_i})\}_{i=1}^n,\Transfer_\Bob()\rangle(\pk_\issuer,\predicate)$.]
	  \,\allowbreak This is an interactive protocol between a sender $\Alice$, a receiver $\Bob$, an a set of nodes along $n$ paths between $\Alice$ and $\Bob$, each containing $m_i$ nodes.
	  The parties take as inputs the issuer's key and the policy, and their respective secret inputs. 
	  At the end of the interaction, $\Bob$ receives as output an integer $n'$ indicating the number of disjoint paths used for the communication, or an error symbol to reject the transmission, while all the other parties do not receive any output.
\end{description}

\subsection{Security Overview and Adversarial Capacity}\label{sec:adversary-model}
In the following we briefly explain the considered adversary model and the intuition underlying our security framework presented in \cref{sec:security-defs}.

As discussed earlier, our ambition is to balance the needs of giving strong guarantees on the selected path to the customer, while at the same time protecting the confidentiality of the operator.

On the one hand, we define \emph{policy compliance}.
This property requires that a malicious network operator cannot forge proofs that a chosen path satisfied certain requirements.
That is, the operator's aim is to convince the communication partners $\Alice$ and $\Bob$ that a QKD-session was carried out over (one or more) communication links satisfying the agreed policy $\phi$ (specifying, e.g., a minimum certification level), while at least one of the involved repeaters did not fulfill the requirements.
In terms of the adversary's capabilities, we consider a malicious network operator controlling the communication network.
The operator has full control over the path used for a key exchange, and may install or remove arbitrary repeaters from the network.
However, we assume that key material of repeaters certified by an authority cannot be manipulated by the network operator -- a seemingly strong yet realistic assumption in the QKD context, which can be achieved, e.g., using secure hardware elements.
Furthermore, as is the case in practice, each repeater is assumed to know its physical neighbors in the network.

On the other hand, we introduce the notion of \emph{path-hiding}, which ensures that the operator's network topology remains hidden from the customer.
This is important as a provider's exact QKD network topology may reveal information about the location of trusted nodes, key-management infrastructure, and physical links, which exposes operational capabilities, redundancy, and potential single points of failure. 
This information could be exploited by competitors for strategic mapping and capacity inference, or by attackers to target the most valuable nodes and routes for disruption or surveillance.
We thus consider an overly strong adversary, who has full information about the network topology, and who is allowed to request an arbitrary (polynomial) number of communication sessions.
Despite this information, the adversary -- controlling the communicating parties $\Alice$ and $\Bob$ -- should not be able to learn anything about the selected path for a given communication session, except for the selected ``entry'' and ``exit'' nodes, which are necessarily revealed due to physical connectivity properties in QKD networks.
This will ensure that in particular no relevant information about a communication path is revealed to a less powerful adversary having less information about the overall network topology.

\subsection{Formal Security Definition}\label{sec:security-defs}

  Besides completeness, and as discussed in the previous section, we require two main properties from auditable QKD protocols, which we will formally define in the following.
  
  \paragraph{Path-Hiding.}
  This property ensures that the topology of the underlying QKD network is not revealed to the sender or receiver of a communication session.
  This is modeled in an indistinguishability game in Fig.\ref{fig:path_hiding}, where we assume that the adversary knows the precise graph of the QKD network.
  
  Let $\network$ be a description of the network graph, containing all nodes and edges. Now, in a first step, we generate all keys honestly, but let the adversary control the attributes of each node, which get certified honestly by the authority. We then give the adversary (controlling potentially multiple senders and receivers) access to two oracles:
  \begin{itemize}
  	\item   First, $\oracle$ allows the adversary to initiate arbitrary (multi-path) transmissions in the network using self-chosen policies over self-chosen paths.
  	\item   Second, the challenge oracle $\oracle_{LR}$, which may be called only once—enables the adversary to select two sets of paths. The oracle then initiates a transmission session using one of the two.
  \end{itemize}
  At the end of the experiment, the adversary needs to guess which paths were used for the transmission. In the oracles, we exclude trivial distinguishing features, like repeating session ids, paths of non-matching lengths, paths with loops, or inconsistent entry/exit nodes (note that $\Alice$ and $\Bob$ trivially know to/from whom they send/receive messages), or a policy that is not satisfied by (some of) the nodes on one of the paths.
  
  A scheme is now said to be path hiding, if no adversary has more than negligible advantage in winning the described experiment compared to random guessing. We formally define this game as follows:
  \begin{definition}[Path hiding]\label{def:path_hiding}
    An auditable QKD protocol is \emph{path hiding}, if for every PPT adversary $\adv$ there exists a negligible function $\negl$ such that for every network $\network$:
    $$
      \left|\Pr\left[\text{Exp}_{\text{path-hiding}}^{\adv}(\secpar)=1\right] - \frac{1}{2}\right|\leq\negl(\secpar),
    $$
    where the experiment is as defined in \cref{fig:path_hiding}.
  \end{definition}
\begin{figure}[th!]
	\centering\renewcommand{\arraystretch}{1.2}
	\begin{tabular}{|l|}
		\hline
		\rule{0pt}{2.5ex}\textbf{Experiment} $\text{Exp}_{\text{path-hiding}}^{\adv}(\secpar)$ \\ \hline
		$b\getsr\{0,1\}$\\
		$\setS\gets\emptyset$\\
		$\pp \getsr \ParGen(1^\secpar)$\\
		$(\sk_\node,\pk_\node)\getsr\KeyGen_\node(\pp)$ for all $\node\in\network$\\
		$(\sk_\issuer,\pk_\issuer)\getsr\KeyGen_\issuer(\pp)$\\
		$(\state,(\vec{\attr}_\node)_{\node\in\network})\gets \adv(\pp,\pk_\issuer,\{\pk_\node\}_{\node\in\network},\network)$\\
		$(\cred;\bot)\getsr\langle \Register_\issuer(\sk_\issuer,\pk_\node),\Register_\node(\sk_\node,\pk_\issuer)\rangle(\vec{\attr}_\node)$ for all $\node\in\network$\\
		$b'\getsr\adv^{\oracle(\cdot,\cdot,\cdot),\oracle_{LR}(\cdot,\cdot,\cdot,\cdot)}(\state)$\\
		\quad where $\oracle(\cdot,\cdot,\cdot)$ on input $(\sid,\predicate,\{(\node_{ij})_{j=1}^{m_i}\}_{i=1}^n)$:\\
		\quad\quad Aborts if $\sid\in\setS$, otherwise sets $\setS\gets\setS\cup\{\sid\}$, and engages in a transmission\\
		\quad\quad  protocol acting as $\{(\node_{ij})_{j=1}^{m_i}\}_{i=1}^n$ for the given $\sid$ and $\predicate$\\
		\quad where $\oracle_{LR}(\cdot,\cdot,\cdot,\cdot)$ on input $(\sid,\predicate,\{(\node_{ij}^0)_{j=1}^{m_i}\}_{i=1}^n,\{(\node_{ij}^1)_{j=1}^{m_i}\}_{i=1}^n)$:\\
		\quad\quad Aborts if it had been activated before, or if $\sid\in\setS$, or if the provided input\\
		\quad\quad paths contain loops, are not pairwise disjoint, or do not have corresponding\\
		\quad\quad lengths or entry/exit nodes, or any node does not satisfy $\predicate$. Otherwise, it sets \\
		\quad\quad $\setS\gets\setS\cup\{\sid\}$, and engages in a  transmission protocol acting as $\{(\node_{ij}^b)_{j=1}^{m_i}\}_{i=1}^n$ \\
		\quad\quad for the given $\sid$ and $\predicate$\\
		Return $b=b'$ \\ 
		\hline
	\end{tabular}
	\vspace{-1em}
	\caption{Path hiding}
	\label{fig:path_hiding}
\end{figure}
  Note that the above definition considers a single authority certifying all nodes in the network.
  In case, e.g., of cross-border communication, different authorities might be involved.
  If hiding the precise authority is important, the definition (as well as the construction presented below) can be adapted in a straightforward manner.

  \paragraph{Policy compliance.}
  Besides hiding the topology, our second main goal is to ensure that the nodes used for routing the message comply with the imposed policy and that the number of disjoint paths is correct.
  
  In our security definition, we again generate all keys honestly, and let the adversary choose the attributes of all devices in the network $\network$.
  Subsequently, the nodes are honestly registered.
  The adversary then gets full access to the nodes' certificates, allowing for more targeted attacks than by solely giving oracle access to the presentation of certificates.
  The adversary then has to define a session id $\sid$, a set of $n$ routes through the network, and a policy $\predicate$ of its choice.
  Based on these outputs, a transmission from $\Alice$ to $\Bob$ via the specified paths for the given session id and policy is initiated.
  The adversary now wins if $\Bob$ does not output an error symbol, and at least one of the following conditions is satisfied:
  \begin{inumerate}
  	\item At least one node in the transmission did not satisfy the policy $\predicate$, or
  	\item The different paths used for the transmission were not disjoint, or
  	\item The number of disjoint paths identified by $\Bob$ differs from the number of actual disjoint paths in the transmission.
  \end{inumerate}
  A scheme is now said to be policy compliant if no adversary has more than negligible advantage in winning the described experiment.
  
\begin{definition}[Policy compliance]\label{def:policy_compliance}
	An auditable QKD protocol is \emph{policy compliant}, if for every PPT adversary $\adv$ there exists a negligible function $\negl$ such that for every network $\network$:
	$$
	\Pr\left[\text{Exp}_{\text{policy-compliance}}^{\adv}(\secpar)=1\right] \leq\negl(\secpar),
	$$
	where the experiment is as defined in \cref{fig:policy_compliance}.
\end{definition}
\begin{figure}[th!]
	\centering \renewcommand{\arraystretch}{1.2}
	\begin{tabular}{|l|}
		\hline
		\rule{0pt}{2.5ex}\textbf{Experiment} $\text{Exp}_{\text{policy-compliance}}^{\adv}(\secpar)$ \\ \hline
		$\pp \getsr \ParGen(1^\secpar)$\\
		$(\sk_\node,\pk_\node)\getsr\KeyGen_\node(\pp)$ for all $\node\in\network$\\
		$(\sk_\issuer,\pk_\issuer)\getsr\KeyGen_\issuer(\pp)$\\
		$(\state,(\vec{\attr}_\node)_{\node\in\network})\gets \adv(\pp,\pk_\issuer,\{\pk_\node\}_{\node\in\network},\network)$\\
		$(\cred;\bot)\getsr\langle \Register_\issuer(\sk_\issuer,\pk_\node),\Register_\node(\sk_\node,\pk_\issuer)\rangle(\vec{\attr}_\node)$ for all $\node\in\network$\\
		$(\sid,\{(\node_{ij})_{j=1}^{m_i}\}_{i=1}^n,\predicate)\getsr\adv(\state,\{\cred_\node\}_{\node\in\network})$\\
		$(\bot;\{\bot\};n')\getsr\langle\Transfer_\Alice(),\{(\Transfer_\node(\sk_{\node_{ij}},\cred_{\node_{ij}},\vec{\attr}_{\node_{ij}})_{j=1}^{m_i})\}_{i=1}^n,\Transfer_\Bob()\rangle(\pk_\issuer,\predicate)$\\
		Return $1$ if $n'\ne\bot$ and:\\
		\quad $\phi(\vec{\attr}_{\node_{ij}})\ne 1$ for some $i\in[n]$ and $j\in[m_i]$, OR \hspace{1.30cm}\comment{policy violation}\\
		\quad $\{\node_{i_0j}\}_{j=1}^{m_{i_0}} \cap \{\node_{i_1j}\}_{j=1}^{m_{i_1}} \ne \emptyset$ for some $i_0\ne i_1$, OR\hspace{1.15cm}\comment{non-disjoint paths}\\
		\quad $n\ne n'$\hspace{6.95cm}\comment{wrong number of paths}\\
		Return $0$\\
		\hline
	\end{tabular}
	\vspace{-1em}
	\caption{Policy compliance}
	\label{fig:policy_compliance}
\end{figure}
\noindent Again, similar as above, an extension of the definition to multiple issuing authorities is straightforward.

\section{An Auditable QKD Protocol}\label{sec:generic_construction}
In the following, we present our auditable QKD protocol.
We first present a generic construction from well-known cryptographic building blocks, and show that it satisfies the requirements set out in \cref{sec:security-defs}.
We then give a concrete instantiation of the generic construction to show its practical usability.

\subsection{A Generic Construction}
  Our generic construction can be built from a signature scheme $\Sig$, a pseudonym scheme $\Nym$, and a non-interactive zero-knowledge proof system $\NIZK$, cf. \cref{sec:signatures,sec:pseudonyms,sec:nizks}.
  
  In a first step, the public parameters are generated by simply generating the parameters needed for the respective building blocks.
  Then, the issuer (authority) generates a key pair for a signature scheme, and each node samples the secret key for a pseudonym scheme.
  Furthermore, each node computes its public key as a pseudonym for a distinguished scope $\mathtt{setup}$.
  Now, for registration, the node first proves in zero-knowledge to the authority that it knows the secret key corresponding to its public key.
  The authority then uses its signing key to generate the certificate simply as a digital signature on the node's public key and the previously agreed attributes (note that validation of attribute values needs to happen out of band). 
  These onboarding steps are formally depicted in  \cref{constr:generic:setup}.

\begin{construction}[ht!]
\begin{minipage}{\textwidth}
\begin{description}\setlength{\itemsep}{5pt}
  \item[\underline{Parameter generation}.]
    The public parameters $\pp$ consist of the parameters $\pp_\Sig$ of the signature scheme $\Sig$, $\pp_\Nym$ of the pseudonym system $\Nym$, and $\pp_\NIZK$ of the non-interactive zero-knowledge proof system $\NIZK$.
    
  \item[\underline{Key generation}.]
    The issuer generates a key pair for a digital signature scheme as $(\sk_\issuer,\pk_\issuer)\getsr\Sig.\KeyGen(1^\secpar)$.

    Each node generates a key pair as $\sk_\node\getsr\Nym.\KeyGen(1^\secpar)$ and $\pk_\node\gets\Nym.\NymGen(\sk_\node,\mathtt{setup})$.

  \item[\underline{Registration}.]
    To register a node $\node$ with public key $\pk_\node$ and attributes $\vec{\attr}_\node$, the node first proves possession of the corresponding secret key:
    $$
      \pi \getsr \NIZK[(\sk_\node):\Nym.\NymGen(\sk_\node,\mathtt{setup})=\pk_\node](\did)\,,
    $$ 
    where $\did$ is a device registration id that is unique per registration process.
    The issuer then signs a credential $\cred\gets\Sig.\Sign((\pk_\node,\vec{\attr}_\node),\sk_\issuer)$ for the node.
\end{description}
\end{minipage}
\caption{Generic construction: Initialization and registration.}
\label{constr:generic:setup}
\end{construction}
  Now, when $\Alice$ wants to transmit a message to $\Bob$, they jointly agree on the number $n$ of disjoint paths to be used, and the policy $\predicate$ that has be satisfied by all nodes on the transmission paths.
  $\Alice$ then selects a unique session id $\sid$ and sends $(\sid,\predicate)$ to the entry node of each path.
  Note here that re-using a session id might leak information about the network topology, which however can be avoided, e.g., by ensuring freshness through inclusion of a timestamp validated by the entry nodes.
  Note further that $\Alice$ learns only the entry nodes of each communication path, which is unavoidable due to the physical connectivity in QKD systems;
  however, as modeled in \cref{def:path_hiding}, the remaining routing is oblivious to $\Alice$ and determined by the network operator.
  
  Upon receiving an (initially empty) set of zero-knowledge proofs and pseudo\-nyms $(\pi,\nym)$, as well as $\sid$ and $\predicate$, each node now computes its own pseudonym $\nym$ for the given session id.
  Subsequently, it computes a zero-knowledge proof of knowledge  $\pi_\node$ that $\nym_\node$ was indeed derived from the same secret key underlying the public key certified by the authority.
  Furthermore, the proof shows that the policy $\predicate$ is satisfied.
  As an additional safeguard against a service provider trying to introduce non-certified nodes, each node additionally computes a second zero-knowledge proof $\sigma$ showing that $\nym$ and its public key are indeed consistent, yet this time revealing its own public key.
  The node finally sends all previously received proofs $\pi$ and pseudonyms $\nym$, its own proof and pseudonyms $(\pi_\node,\nym_\node)$, as well as $\sigma$, $\sid$, and $\predicate$ to the next node on the path, who only accepts the inputs if $\sigma$ is valid. 
  See also \cref{fig:overview} for a visualization of this message flow.
  \begin{figure}[ht!]
  	\centering{\larger\larger\resizebox{\linewidth}{!}{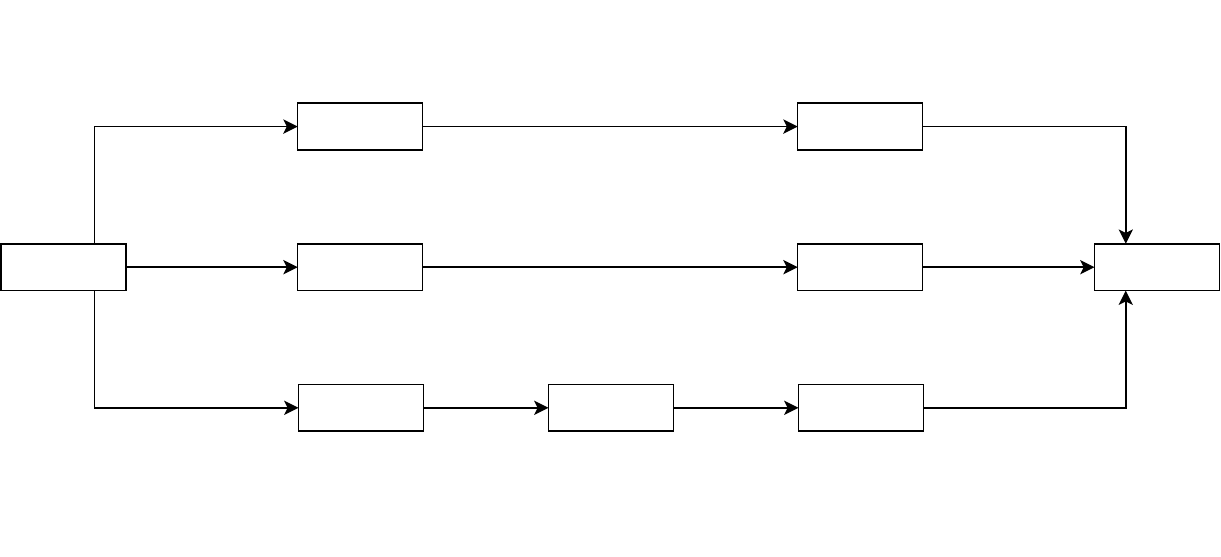}
  		\caption{Example message flow for three disjoint paths.}
  		\label{fig:overview}
  	}
  \end{figure}
  
  Eventually, $\Bob$ receives zero-knowledge proofs and pseudonyms from the exit nodes of all $n$ paths.
  $\Bob$ now checks whether all zero-knowledge proofs are valid, thereby receiving guarantees that all nodes involved in the transmission actually satisfied the policy $\predicate$ and that all received pseudonyms are authentic.
  Furthermore, $\Bob$ checks that all pseudonyms are pairwise distinct, thereby receiving guarantees that no node was involved twice in the transmission, and thus the paths were pairwise disjoint.
  
  Note that the additional proof $\sigma$ is not forwarded to $\Bob$, but always just sent to the subsequent node, who verifies its correctness and aborts otherwise.
  Given that in a QKD networks each node knows its neighbors (and thus also their public keys), this allows the node to check that the last $(\pi,\nym)$--sharing the same $\nym$ as $\sigma$--was indeed computed by the previous node in the path, as the public key underlying $\nym$ is revealed in $\sigma$.
  This prevents against service providers deploying non-certified nodes in the network, which would solely forward the received auditing data without appending their own information.
  
  A formal description of the protocol is now provided in \cref{constr:generic:routing}.

\begin{construction}[ht!]
\begin{minipage}{\textwidth}
\begin{description}\setlength{\itemsep}{5pt}
  \item[\underline{Sending}.]
    Alice defines the number $n$ of disjoint paths over which the secret has to be transmitted.
Furthermore, Alice agrees with Bob on a policy $\predicate$ to be satisfied by all nodes.
    She then selects a unique session id $\sid$, and sends  $((\varepsilon,\varepsilon),\sid,\predicate,\varepsilon)$ to the entry node of each path previously received from the service provider, i.e., to $\node_{i1}$ for all $i$.

  \item[\underline{Forwarding}.]
    A node $\node_{ij}$, upon receiving $(\msg=(((\pi_{i1},\nym_{i1}),\dots,(\pi_{i,j-1},\nym_{i,j-1})),\sid),\sigma)$, behaves as follows:
    \begin{itemize}
      \item
        If $j>1$, it fetches the public key $\pk_{i,j-1}$ of the sending node, and  verifies the validity of 
        the received 
        $\sigma$.
        The node then computes $\nym_{ij}\getsr\Nym.\NymGen(\sk_{ij},\sid)$ as well as the following two zero-knowledge proofs:
        \begin{align*}
          \pi_{ij} \gets \NIZK[&(\sk_{ij},\pk_{ij},\vec{\attr}_{ij},\cred_{ij}):\\
          & \Sig.\Verify((\pk_{ij},\vec{\attr}_{ij}),\cred_{ij},\pk_\issuer)=1 ~\land && \comment{the credential is valid}\\
          & \nym_{ij}=\Nym.\NymGen(\sk_{ij},\sid) ~\land  && \comment{$\nym_{ij}$ is well-formed and ...}\\
          & \pk_{ij}=\Nym.\NymGen(\sk_{ij},\mathtt{setup}) ~\land && \comment{... belongs to the right $\sk_{ij}$}\\
          & \predicate(\vec{\attr})=1](\msg) && \comment{the policy is satisfied}\\
          \sigma_{ij} \gets \NIZK[&(\sk_{ij}):\nym_{ij}=\Nym.\NymGen(\sk_{ij},\sid) ~\land \\
          &\pk_{ij}=\Nym.\NymGen(\sk_{ij},\mathtt{setup})](\msg)\,.
        \end{align*}
      \item 
        The node then transfers $((\pi,(\pi_{ij},\nym_{ij})),\sid,\predicate,\sigma_{ij})$ to the next node of each path, i.e., to $\node_{ij+1}$ or Bob in the case that $j=m_i$.
\end{itemize}

  \item[\underline{Receiving}.]
    Having received messages from $\node_{im_i}$ for all $i\in[n]$, Bob verifies all proofs and signatures.
    
    It furthermore validates that all received $\nym_{ij}$ are pairwise distinct. If all checks validate correctly, Bob outputs $n$, otherwise it aborts with $\bot$.
\end{description}
\end{minipage}
\caption{Generic construction: Message routing.}
\label{constr:generic:routing}
\end{construction}

\subsection{Security Analysis}
  We now formulate the main results of our work, and provide formal proofs that our generic construction indeed satisfies the properties defined in \cref{sec:security-defs}.
  
\begin{theorem}
	The auditable QKD protocol presented in \cref{constr:generic:setup,constr:generic:routing} is path-hiding in the sense of \cref{def:path_hiding}, if the deployed pseudonym system $\Nym$ is unlinkable and the proof system $\NIZK$ is simulation sound extractable.
\end{theorem}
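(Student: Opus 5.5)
We prove the statement with a sequence of hybrid games, starting from $\text{Exp}_{\text{path-hiding}}^{\adv}$ with $b=0$ and ending with $b=1$, and show that consecutive games are indistinguishable. First we fix what the adversary actually sees in the challenge session. As $\adv$ controls $\Alice$ and $\Bob$, its view of the $\oracle_{LR}$ session is what $\Bob$ receives from the exit nodes, namely, for each path $i$, the tuples $(\pi_{ij},\nym_{ij})_{j\in[m_i]}$, together with $\sid$, $\predicate$, and the last-hop proof $\sigma_{im_i}$. The entry nodes $\node_{i1}$ and exit nodes $\node_{im_i}$ are identical in both challenge path sets by the oracle's abort conditions, and so are the lengths $m_i$. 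The intermediate proofs $\sigma_{ij}$ with $j<m_i$ are sent only between nodes and are never revealed to $\adv$; we will argue that $\sigma_{im_i}$ is harmless because it concerns the exit node, which coincides in both worlds. The difference between $b=0$ and $b=1$ therefore lies only in the pseudonyms and proofs $\pi_{ij}$ for intermediate positions $1<j<m_i$ (and the position-$1$ items are again identical).

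\emph{Hybrid 1 (simulate proofs).} We replace the $\NIZK$ parameters with simulated ones carrying a trapdoor, and answer every proof $\pi_{ij}$ in the challenge session (and, for uniformity, in all $\oracle$ queries) with a simulated proof on statement $(\nym_{ij},\sid,\predicate,\pk_\issuer)$ and context $\msg$. The zero-knowledge property ensures indistinguishability; we use the simulation-sound extractable variant so that the simulator remains valid across polynomially many oracle queries. After this hop the $\pi_{ij}$ no longer depend on $\sk_{ij}$, $\pk_{ij}$, $\cred_{ij}$, or $\vec{\attr}_{ij}$. The policy check in $\oracle_{LR}$ guarantees that all challenge statements are true, so no false-statement issue arises for completeness of the challenge.

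\emph{Hybrids 2.$k$ (swap pseudonyms).} Next we enumerate the intermediate positions $(i,j)$ and, one at a time, replace $\nym_{ij}=\NymGen(\sk_{\node^0_{ij}},\sid)$ with $\NymGen(\sk_{\node^1_{ij}},\sid)$. Each step is reduced to unlinkability of $\Nym$. The reduction embeds the two candidate users $\node^0_{ij},\node^1_{ij}$ as challenge users of the unlinkability game and obtains all of their other pseudonyms, including $\pk_\node=\NymGen(\sk_\node,\mathtt{setup})$ and the pseudonyms for each $\oracle$ session id, from the unlinkability oracle. The remaining nodes are simulated with honestly generated keys. All proofs involving the embedded nodes (registration proofs, $\pi$, and $\sigma$) are simulated via the trapdoor. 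This requires also simulating the $\sigma$ and registration proofs in Hybrid 1, which we include there. The scope $\sid$ of the challenge is fresh because the oracles enforce $\sid\notin\setS$ and abort otherwise, so it is a legal unlinkability challenge scope. Pairwise disjointness of the challenge paths ensures that each node appears at most once per challenge set, so no pseudonym collision within the session gives away the bit. Credentials are generated honestly by the reduction, which knows $\sk_\issuer$.

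The final hybrid is exactly the $b=1$ experiment with simulated proofs, and undoing Hybrid 1 completes the argument. Summing the advantages gives a negligible total over the polynomially many positions.

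\textbf{Main obstacle.} The unlinkability reduction is the main difficulty. Standard unlinkability definitions provide a left-or-right challenge on a single scope, whereas here the same two users also appear in other positions and other sessions, and their $\mathtt{setup}$ pseudonyms are public. We must therefore check that the unlinkability notion lets the reduction query pseudonyms of challenge users on all scopes other than the challenge $\sid$ (including $\mathtt{setup}$). We also need that swapping two nodes in one position does not create inconsistency with another position in which one of them appears; we handle this by swapping in a permutation order, or by a two-user-at-a-time formulation. If the available notion is weaker, we would instead first replace the challenge-session pseudonyms with pseudonyms of fresh random keys, and then switch back.
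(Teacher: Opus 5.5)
\textbf{Your main route has a genuine gap.} Swapping $\nym_{ij}$ from $\NymGen(\sk_{\node^0_{ij}},\sid)$ to $\NymGen(\sk_{\node^1_{ij}},\sid)$ one position at a time fails in general. The failure is not confined to the reduction: consecutive hybrids can be trivially distinguishable.

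Take a single path and the challenge $(A,X,Y,B)$ versus $(A,Y,X,B)$. This is legal whenever $\network$ contains the required edges: no loops, equal length, same entry and exit nodes. Whichever intermediate position you swap first, the resulting hybrid contains $\NymGen(\sk_Y,\sid)$ (respectively $\NymGen(\sk_X,\sid)$) twice. So $\adv$ simply tests two pseudonyms for equality, which by collision resistance succeeds only with negligible probability in the preceding hybrid. Correspondingly, the reduction would need the challenge user's pseudonym on the challenge scope $\sid$ at a second position, which no unlinkability game allows.

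Your proposed remedies do not close this:
\begin{itemize}
\item Ordering the swaps along the bijection $\node^0_{ij}\mapsto\node^1_{ij}$ handles chains (nodes occurring in only one challenge set). It does not handle cycles such as this transposition.
\item A simultaneous ``two users at a time'' swap requires a stronger notion, one that releases both challenge users' pseudonyms on $\sid$ in possibly swapped order. That is not the single-pseudonym unlinkability the theorem assumes.
\end{itemize}
This obstacle has nothing to do with whether $\mathtt{setup}$ or other scopes may be queried. So your fallback is needed always, not only ``if the available notion is weaker''.

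\textbf{The fix is your fallback, and it is exactly the paper's proof.} Games 1--2 simulate the parameters and all $\pi_{ij}$ and $\sigma_{ij}$ in $\oracle_{LR}$. Game 3 then replaces every challenge-session pseudonym by $\NymGen(r_{ij},\sid)$ for fresh, independent $r_{ij}$. This works position by position for two reasons. First, $\sid$ is fresh. Second, loop-freeness and disjointness ensure every node occurs at most once per challenge set. Each step is therefore a left-or-right challenge between one real user and a never-used key. The real user's $\mathtt{setup}$ and $\oracle$-session pseudonyms come from the unlinkability oracle. The not-yet-replaced positions belong to non-challenge users, whose pseudonyms on $\sid$ can be queried. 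After these replacements the adversary's view is already independent of $b$, so no switching back to $b=1$ is needed.

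With this as your main argument the proposal is sound. One small correction remains: in $\oracle$ sessions you may only simulate proofs for nodes that actually satisfy $\predicate$, and must reproduce the honest failure behaviour otherwise. Simulating a proof for a node violating $\predicate$ would turn a failing session into an accepting one.
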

\begin{proof}
	The proof follows standard techniques for privacy-enhancing protocols, and can be seen using the following series of games which are all computationally indistinguishable.
	
	\medskip\noindent\textbf{Game 0:} This is the original experiment as in \cref{def:path_hiding}.
	
	\medskip\noindent\textbf{Game 1:} We first modify the setup of the proof system $\NIZK$ by simulating the necessary parameters.
	
	This is indistinguishable by definition of simulation-sound extractability.

	\medskip\noindent\textbf{Game 2:} We now modify $\oracle_{LR}$ such that all zero-knowledge proofs $\pi_{ij}$ and $\sigma_{ij}$ generated by the different nodes $\node_{ij}^b$ for $i\in[n]$ and $j\in[m_i]$.
	
	By the zero-knowledge property of $\NIZK$ this is indistinguishable from the adversary's point of view.
	Note here that the adversary (controlling $\Alice$ and $\Bob$) only gets to see the $\pi_{ij}$, as well as the last $\sigma_{ij}$ per path, i.e., for $j=m_i$.
	Note further that in particular, after this game, all proofs are independent of the nodes' certificates $\cred_{ij}$, attributes $\vec{\attr}_{ij}$, and the secret keys $\sk_{ij}$.
	
	\medskip\noindent\textbf{Game 3:} We now change the computation of $\nym_{ij}=\Nym.\NymGen(\sk_{ij},\sid)$ in $\oracle_{LR}$ to $\nym_{ij}=\Nym.\NymGen(r_{ij},\sid)$ for fresh and independently sampled $r_{ij}\getsr\Nym.\KeyGen(1^\lambda)$.
	  
	Here, it is important to first note that $\sid$ is fresh, i.e., has not been used in any other transmission in the experiment, i.e., triggered through $\oracle$.
	The indistinguishability of this change then follows directly from the unlinkability of the pseudonym system, by which pseudonyms of the same and different users cannot be kept apart.
	
	\medskip
	
	The adversary's view in the last game is now fully independent of the choice of $b$, and thus the claim follows.
\qed
\end{proof}

\begin{theorem}
	The auditable QKD protocol presented in \cref{constr:generic:setup,constr:generic:routing} is policy-compliant in the sense of \cref{def:policy_compliance}, if the deployed signature scheme $\Sig$ is EUF-CMA secure, the pseudonym system $\Nym$ is collision resistant, and the proof system $\NIZK$ is simulation sound extractable.
\end{theorem}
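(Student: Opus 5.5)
We argue by a reduction-based case analysis on the winning condition in \cref{fig:policy_compliance}, after first switching to a game where the common reference string of $\NIZK$ is generated together with an extraction trapdoor (indistinguishable by simulation-sound extractability). Suppose $\adv$ wins with non-negligible probability. Then $\Bob$ did not output $\bot$, so every proof $\pi_{ij}$ he received verifies with respect to the context $\msg$, and all received pseudonyms $\nym_{ij}$ are pairwise distinct. Run the extractor on every accepted $\pi_{ij}$. Except with negligible probability, this yields witnesses $(\sk^*_{ij},\pk^*_{ij},\vec{\attr}^*_{ij},\cred^*_{ij})$ such that:
\begin{itemize}
\item $\cred^*_{ij}$ is a valid signature on $(\pk^*_{ij},\vec{\attr}^*_{ij})$ under $\pk_\issuer$;
\item $\nym_{ij}=\Nym.\NymGen(\sk^*_{ij},\sid)$ and $\pk^*_{ij}=\Nym.\NymGen(\sk^*_{ij},\mathtt{setup})$;
\item $\predicate(\vec{\attr}^*_{ij})=1$.
\end{itemize}
There are only polynomially many proofs, so a union bound keeps the total extraction failure probability negligible.

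\textbf{Step 1: extracted credentials are issued ones.} If some pair $(\pk^*_{ij},\vec{\attr}^*_{ij})$ differs from every $(\pk_\node,\vec{\attr}_\node)$ signed during registration, we build an EUF-CMA forger. The reduction embeds the challenge signature key as $\pk_\issuer$ and generates all node keys and the NIZK trapdoor itself. It answers the registration signatures through the signing oracle and outputs $\cred^*_{ij}$ as the forgery. Hence, except with negligible probability, each extracted pair equals $(\pk_{\node},\vec{\attr}_{\node})$ for some registered $\node\in\network$.

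\textbf{Step 2: link extracted keys to physical nodes.} Since $\pk^*_{ij}=\pk_\node=\Nym.\NymGen(\sk_\node,\mathtt{setup})$, collision resistance of $\Nym$ for the scope $\mathtt{setup}$ gives $\sk^*_{ij}=\sk_\node$ up to negligible probability. It follows that $\nym_{ij}=\Nym.\NymGen(\sk_\node,\sid)$. We must also check that this node is in fact the $\node_{ij}$ on the path. Here we use the hop-by-hop $\sigma$ proofs: each honest successor verifies $\sigma$ against the known public key of its physical predecessor, and the key material of certified nodes cannot be manipulated. The $\sigma$ extracted from the previous hop therefore ties $\nym_{ij}$ to $\pk_{\node_{ij}}$, and by collision resistance the pseudonym at each position is that of the actual $\node_{ij}$. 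Consequently $\vec{\attr}^*_{ij}=\vec{\attr}_{\node_{ij}}$, and policy violation is ruled out.

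\textbf{Step 3: disjointness and path count.} Suppose $\node_{i_0j}=\node_{i_1j'}$ for some $i_0\neq i_1$. By Step 2, the two pseudonyms are both $\Nym.\NymGen(\sk_\node,\sid)$, hence equal, and $\Bob$ would have rejected. So non-disjointness cannot occur. $\Bob$ outputs exactly the number $n$ of paths from which he received messages, so $n'=n$ holds whenever he accepts.

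\textbf{Main obstacle.} The hardest step is Step 2: binding each accepted proof to the node that physically sits at that position. Soundness of $\pi$ alone only shows that some certified policy-satisfying node generated it. An operator could replay one compliant node's proofs along a path of non-compliant repeaters. Making this rigorous needs the model assumption that repeaters' key material and forwarding verification are honest, together with the fact that $\sigma$ is bound to the context $\msg$; simulation soundness then prevents $\sigma$ from being reused across positions. I would formalize this via a hybrid in which all $\sigma$ are extracted, and argue by induction along each path.
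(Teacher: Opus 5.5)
Your proposal is correct and follows essentially the same route as the paper: switch to a trapdoor CRS, extract witnesses from the proofs $\Bob$ accepts, use EUF-CMA to show each extracted $(\pk_{ij},\vec{\attr}_{ij})$ was certified at registration, use collision resistance of $\Nym$ to tie pseudonyms to distinct nodes (giving disjointness via $\Bob$'s distinctness check), and get $n'=n$ by inspection of the construction. The one difference is that your $\sigma$-chain argument in Step~2 is more than the formal model needs: in the experiment of \cref{fig:policy_compliance} every $\Transfer_\node$ runs honestly, so $\nym_{ij}=\Nym.\NymGen(\sk_{\node_{ij}},\sid)$ holds by construction, collision resistance alone identifies the extracted key with that of $\node_{ij}$, and the replay attack you describe lies outside the formal definition.
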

\begin{proof}
	We prove the statement using a series of games.
	
	\medskip\noindent\textbf{Game 0:} This is the original experiment as in \cref{def:policy_compliance}.	
	
	\medskip\noindent\textbf{Game 1:} In this game, we modify the setup of the proof system $\NIZK$ to simulate the necessary parameters.
	
	This change is indistinguishable by definition of simulation-sound extractability.
	
	\medskip\noindent\textbf{Game 2:} In this game, we use the trapdoor from the previous game to extract the witnesses of all proofs $\pi_{ij}$ and $\sigma_{ij}$ received by $\Bob$.
	That is, we obtain values $\cred_{ij}$, $\vec{\attr}_{ij}, \sk_{ij}, \pk_{ij}$ satisfying the relations stated in $\pi_{ij}$ and $\sigma_{ij}$ as described in \cref{constr:generic:routing}, respectively.
	If the extraction of valid witnesses fails for either of the proofs, the experiment aborts and the adversary wins.
	
	By the simulation sound extractability of the proof system, the difference between these two games is negligible.
	
	\medskip\noindent\textbf{Game 3:} In this game, we additionally check that $\vec{\attr}_{ij}$ and $\pk_{ij}$ have indeed been signed by the issuing authority as part of valid registration processes.
	If this is not the case, we abort.
	
	By the EUF-CMA security of the signature scheme, an abort only happens with negligible probability.

	\medskip\noindent\textbf{Game 4:} In this game, we abort if two different nodes generated identical pseudonyms for the given $\sid$.
	
	By the collision resistance of the pseudonym system, this can only happen with negligible probability.
	
	\medskip
	
	Overall, we are now at a point where we extracted from each node $\node_{ij}$ a set of attributes $\vec{\attr}_{ij}$ satisfying the required policy $\predicate$, and previously certified by the issuing authority together with a public key $\pk_{ij}$ which was also used to compute $\nym_{ij}$.
	Furthermore, the pseudonyms were all honestly generated and pairwise different, proving the disjointness of the different paths.
	Finally, the number $n'$ of paths obtained by $\Bob$ corresponds to the claimed number $n$ by inspection of the construction (note here that in the definition we assume that the protocol is executed by honest nodes).

    The claim of the statement now follows immediately.
\qed
\end{proof}

The proof strategy shows that the security proof remains valid even if the adversary controls all secret key material of all nodes before the protocol starts. In particular, the key pair $(\sk_\node,\pk_\node)$ need not be honestly generated.
However, this stronger model does not reflect realistic scenarios, as nodes are typically subject to thorough auditing.
We therefore chose not to adopt this stronger model in \cref{def:policy_compliance}, to avoid ruling out constructions that align with real-world requirements.

\subsection{Multiple issuers}
  In realistic deployments, it might happen that QKD links involving multiple network operators have to be established, e.g., when communicating across country borders.
  In this case, as already briefly mentioned in \cref{sec:security-defs}, a single issuing authority might not be realistic anymore, but rather one issuer, e.g., per EU member state might exist.
  Using the framework and construction presented so far, it would now be leaked to the receiver who certified the different repeaters, as the issuer's public key $\pk_\issuer$ is used when verifying the zero-knowledge proofs, thereby disclosing some minimum information about the path being used.
  
  We deliberately excluded this scenario from our definitional framework and main construction to maintain clarity and comprehensibility.
  Yet, this limitation can be easily addressed if needed, using the concept of issuer-hiding attribute-based credentials, first introduced by Bobolz et al.~\cite{CANS:BEKRS21}, and later picked up also by, e.g., \cite{CCS:MBGLS23,PKC:ConLafPer22,PoPETS:BFGP22}.
  Such systems allow one to hide the precise issuer of a credential, while still guaranteeing that it was among a set of accepted issuers.
  
  Following the approach of \cite{CANS:BEKRS21}, this could be achieved by modifying the computation of $\pi_{ij}$ as follows.
  Instead of proving knowledge of $\pk_{ij}, \vec{\attr}_{ij}, \cred_{ij}$ such that 
  $$
  \Sig.\Verify((\pk_{ij},\vec{\attr}_{ij}),\cred_{ij},\pk_\issuer)=1\,,
  $$ 
   one would prove knowledge of additional values $\pk_\issuer$ and $\tau_{\pk_\issuer}$ such that 
  $$
    \Sig.\Verify((\pk_{ij},\vec{\attr}_{ij}),\cred_{ij},\pk_\issuer)=1\,\land\,\Sig'.\Verify(\pk_\issuer,\tau_{\pk_\issuer},\pk')=1\,.
  $$ 
  Here, $\Sigma'$ is an appropriate signature scheme, $\tau_{\pk_\issuer}$ is a signature on the public key of the issuer under some $\pk'$ trusted by $\Alice$ and $\Bob$.
  Depending on the scenario, $\pk'$ can be a fresh key for which all acceptable issuers (e.g., those of EU27 member states) are signed for a single transmission.
  Alternatively, the signatures $\tau_{\pk_\issuer}$ could also be computed and published by a trusted entity, and used across multiple sessions.
  For further details on this approach, we refer to the original work on issuer-hiding credentials.

\section{Concrete Instantiation}\label{sec:concrete_construction}
  In the following we now present a specific instantiation of our generic construction, in order to prove the practicability of our approach.
  
  \subsection{Building Blocks}
    We first detail the instantiations of the building blocks introduced in \cref{sec:signatures,sec:pseudonyms,sec:nizks}.

    \paragraph{Structure-Preserving Signatures.}
      In the following, we present a version of Groth's scheme~\cite{AC:Groth15} tailored to our needs, as we only need to sign a single group element.
      Similar to previous work~\cite{CCS:CamDriDub17,CANS:BEKRS21} we thereby consider a version $\Groth_1$ signing elements of $\GG_1$.
      More specifically, as our construction in the following only needs to sign a single element in $\GG_1$, we also restrict the presentation in the following to such a case.

\begin{itemize}
	\item 
	$\Groth_1.\ParGen(1^\secpar)$ generates public parameters of a bilinear group $(\GG_1,\GG_2,\allowbreak\GG_T,e,p,G,\hat{G})$ of prime order $p$, where $G\in\GG_1$ and $\hat{G}\in\GG_2$ are generators.
	It additionally outputs an element $Y\getsr\GG_1$.
	\item 
	$\Groth_1.\KeyGen(\pp)$ samples $\sk\gets\ZZ_p^*$ and sets $\pk\gets\hat{G}^\sk$.
	\item 
	$\Groth_1.\Sign(\pp,\sk,\msg)$ samples $r\gets\ZZ_p^*$ and computes a signature $\sigma=(\hat{R},S,T)=(\hat{G}^r,(Y\cdot G^\sk)^{1/r},(Y^\sk\cdot\msg)^{1/r})$.
	\item 
	$\Groth_1.\Rand(\pp,\sigma)$ rerandomizes a valid signature $\sigma$ by sampling $r'\gets\ZZ_p^*$ and outputting a randomized signature $\sigma'=(\hat{R}',S',T')=(\hat{R}^{r'},S^{1/r'},T^{1/r'})$ on the same message.
	\item 
	$\Groth_1.\Verify(\pp,\pk,\sigma,\msg)$ outputs $1$ if and only if it holds that $e(S,\hat{R})=e(Y,\hat{G})\cdot e(G,\pk)$ and $e(T,\hat{R})=e(Y,\pk)\cdot e(\msg,\hat{G})$.      
\end{itemize}

      As shown by Groth~\cite{AC:Groth15}, the scheme satisfies EUF-CMA in the generic group model.
      Furthermore, it is easy to see that outputs of $\Groth_1.\Rand$ follow the same distribution as fresh signatures on the same message.

      In the following we will not make $\pp$ explicit as input to the algorithms.
      Furthermore, a dual scheme signing elements in $\GG_2$ can easily be obtained by switching the roles of $\GG_1$ and $\GG_2$.

    \paragraph{Pseudonyms.}
      The following description follows that of Camenisch et al.~\cite{SAC:CKLMNP15}.
\begin{itemize}
	\item 
	$\SENym.\ParGen(1^\secpar)$ outputs a parameters $\pp$ specifying a group $\GG$ of prime order $p$ together with a generator $G$ of $\GG$.
	Furthermore, $\hash:\zo^*\to\GG$ specifies a cryptographic hash function.
	\item 
	$\SENym.\KeyGen(\pp)$ samples a user secret key as $\sk\getsr\ZZ_p^*$.
	\item 
	$\SENym.\NymGen(\pp,\sk,\scope)$ computes a scope-exclusive pseudonym as $\nym\gets\hash(\scope)^\sk$.
\end{itemize}
      Under the DDH assumption and in the random oracle model, the scheme can be shown to be unlinkable and collision resistant.

      In the following we will not make $\pp$ explicit as input to the algorithms.

    \paragraph{Zero-Knowledge Proofs.}
      Depending on the specific proof goal, different instantiations can be used.
      All our proof goals correspond to proving the secret preimage of a homomorphism, such that the Schnorr protocol~\cite{C:Schnorr89} can be used.
      The protocol can be made turned into a non-interactive, simulation sound extractable variant using the Fiat-Shamir heuristic~\cite{C:FiaSha86,AC:BerPerWar12}.

      Alternatively, also Groth-Sahai proofs~\cite{EC:GroSah08}, which also satisfy simulation-sound extractability, could be deployed for all proof goals used in our construction.

  \subsection{Instantiating the Generic Construction}
    Now, putting all together, we obtain the construction presented in \cref{constr:inst:setup,constr:inst:routing}.
    Note that in the forwarding phase in \cref{constr:inst:routing}, the efficiency of the zero-knowledge proof was slightly increased by not proving explicitly that $\pk=\Nym.\NymGen(\sk_{ij},\mathtt{setup})$.
    Rather, this is proven implicitly by not proving that $\pk_{ij}$ is contained in the credential $\cred$ but that the node knows the discrete logarithm of the signed message, i.e., that $\hash(\texttt{setup})^{\sk_{ij}}$ is signed.
    Importantly, this syntactical change does not modify the proof goal, such that the security proofs of the generic construction still holds for the instantiation.
    
\begin{construction}[th!]
	\begin{minipage}{\textwidth}
		\begin{description}\setlength{\itemsep}{5pt}
			\item[\underline{Parameter generation}.]
              Generates public parameters of a bilinear group $(\GG_1,\GG_2,\allowbreak\GG_T,e,p,G,\hat{G})$ of prime order $p$, where $G\in\GG_1$ and $\hat{G}\in\GG_2$ are generators.
              Sample $Y\getsr\GG_1$.
              Define a cryptographically secure hash function $\hash:\{0,1\}^*\to\GG_1$.
              Finally, sample $H_i\getsr\GG_1$ for $i\in[\ell]$, where $\ell$ is the maximum number of supported attributes.
              
              Output $\pp\gets(\GG_1,\GG_2,\GG_T,e,p,G,\hat{G},Y,\hash,(H_i)_{i=1}^\ell)$.

			\item[\underline{Key generation}.]
			The issuer computes a Groth key pair as $\sk_\issuer\getsr\ZZ_p^*$ and $\pk_\issuer=\hat{G}^{\sk_\issuer}$.
			
			Each node samples $\sk_\node\getsr\ZZ_p^*$ and sets $\pk_\node=\hash(\mathtt{setup})^{\sk_\node}$.
			
			\item[\underline{Registration}.]
			To register, a node $\node$ with attributes $\vec{\attr}_\node$ computes:
			$$
			\pi \getsr \NIZK[(\sk_\node):\pk_\node=\hash(\mathtt{setup})^{\sk_\node}](\did)\,,
			$$ 
			where $\did$ is a nonce chosen by the issuer.
			The issuer then computes $\msg$ as a Pedersen hash to $\pk_\node$ and $\attr_\node$ as $\msg=\pk_\node\cdot\prod_{i=1}^\ell H_i^{\attr_i}$. 
			It then computes the credential as a Groth signature, i.e., it samples $r\gets\ZZ_p^*$ and sets $\cred=(\hat{R},S,T)=(\hat{G}^r,(Y\cdot G^{\sk_\issuer})^{1/r},(Y^{\sk_\issuer}\cdot \msg)^{1/r})$.
		\end{description}
	\end{minipage}
	\caption{Concrete instantiation: Initialization and registration.}
	\label{constr:inst:setup}
\end{construction}

    Given that the resulting proof goal still realizes the proof goal in the generic construction, it is important to note that the security analysis of the generic construction still holds true for the instantiation.

\begin{construction}[ht!]
	\begin{minipage}{\textwidth}
		\begin{description}\setlength{\itemsep}{5pt}
			\item[\underline{Sending}.]
			This step is identical to the generic construction.
			\item[\underline{Forwarding}.]
			A node $\node_{ij}$, upon receiving $(\msg=(((\pi_{i1},\nym_{i1}),\dots,(\pi_{i,j-1},\nym_{i,j-1})),\sid),\sigma)$, behaves as follows:
			\begin{itemize}
				\item
				If $j>1$, it fetches the public key $\pk_{i,j-1}$ of the sending node, and  verifies the validity of the received 
				$\sigma$.
				\item
				The node re-randomizes $\cred=(\hat{R},S,T)$ by sampling $r'\gets\ZZ_p^*$, obtaining $(\hat{R}',S',T')=(\hat{R}^{r'},S^{1/r'},T^{1/r'})$.
				It blinds the result by sampling $\alpha,\beta\getsr\ZZ_p^*$ and setting $(\hat{R}'',S'',T'')=(\hat{R}',S'^{1/\alpha},T'^{1/\beta})$.
				The node computes $\nym_{ij}\getsr\allowbreak\hash(\sid)^{\sk_{ij}}$ as well as the following two Schnorr-like zero-knowledge proofs:
				\begin{align*}
					\pi_{ij}' \gets \NIZK[&(\sk_{ij},\vec{\attr}_{ij},\alpha,\beta):\\
					& 
					e(S'',\hat{R}'')^\alpha = e(Y,\hat{G})\cdot e(G,\pk_\issuer) ~\land \\
					&e(T'',\hat{R}'')^\beta = e(Y,\pk_\issuer)\cdot e\left(\hash(\mathtt{setup})^{\sk_{ij}}\prod_{i=1}^\ell H_i^{\attr_i},\hat{G}\right)
					\\
					& \nym_{ij}=\hash(\sid)^{\sk_{ij}} ~\land \\
					& \predicate(\vec{\attr})=1](\msg) \\
					\sigma_{ij}' \gets \NIZK[&(\sk_{ij}):\nym_{ij}=\hash(\sid)^{\sk_{ij}} ~\land ~
					\pk_{ij}=\hash(\mathtt{setup})^{\sk_{ij}}](\msg)\,.
				\end{align*}
				\item 
				The node finally sets $\pi_{ij}=(\pi_{ij}',\hat{R}'',S'',T'')$ and transfers $((\pi,(\pi_{ij},\nym_{ij})),\sid,\sigma_{ij})$ to the next node of each path, i.e., to $\node_{ij+1}$ or Bob in the case that $j=m_i$.
			\end{itemize}
			
			\item[\underline{Receiving}.]
			This step is identical to the generic construction.
		\end{description}
	\end{minipage}
	\caption{Concrete instantiation: Message routing.}
	\label{constr:inst:routing}
\end{construction}

  \subsection{Efficiency Analysis}
  In the following we estimate the actual complexity of our construction.
  As the key generation and registration protocols are only executed once per node, they are omitted in the analysis.
  
  We thus first count the number of predominant operations (i.e., full-length exponentiations as well as pairing evaluations, yet omitting simple group operations or operations in $\ZZ_p$ as well as hash evaluations) for each node $\node_{ij}$ in the transmission as well as for the final receiver $\Bob$, cf. \cref{tab:computation_costs}, where for clarity we break down the costs for each step in the computation.
  Note here that the receiver only has to verify one $\sigma_{ij}'$, but all proofs $\pi_{ij}'$ generated during the transmission, and thus the overall costs are linear in the number $n$ of involved nodes.

\begin{table}[h!]
	\centering
    \begin{tabular}{p{0.4cm} p{3.4cm} C{1.9cm} C{1.9cm} C{1.9cm} C{1.9cm}}    
  	  \toprule
	  & & $\GG_1$ & $\GG_2$ & $\GG_T$ & $e(\cdot,\cdot)$  \\
	  \midrule
    \multirow{6}{*}{\rotatebox[origin=c]{90}{Repeater node}}
      & Verifying $\sigma_{ij}'$       & $4$ & -- & -- & --\\ 
  	  & Computation of $\nym_{ij}$     & $1$  & -- & -- & --\\ 
	  & Deriving $(\hat{R}'',S'',T'')$ & $4$  & $1$  & -- & --\\ 
	  & Computing $\pi_{ij}'$          & $\ell+2$ & -- & $2$  & $4$\\ 
	  & Computing $\sigma_{ij}'$       & $2$ & -- & -- & --\\ 
	  \cmidrule{2-6}
      & Total & $\ell+13$ & $1$ & $2$ & $4$\\
	  \midrule
    \multirow{3}{*}{\rotatebox[origin=c]{90}{Receiver}}
& Verifying final $\sigma_{ij}'$       & $4$ & -- & -- & --\\ 
& Verifying all $\pi_{ij}'$       & $(\ell+3)\cdot n$ & -- & $4\cdot n$ & $4\cdot n$\\ 
	  \cmidrule{2-6}
& Total & $(\ell+3)\cdot n + 4$ & -- & $4\cdot n$ & $4\cdot n$\\
\bottomrule
    \end{tabular}
\vspace*{2pt}
\caption{Computational costs per node}
\label{tab:computation_costs}
\end{table}

Besides this theoretical analysis, we also provide actual benchmarks from a prototypical implementation of our application for different numbers of nodes ($n=10,\dots,100$), attributes ($\ell=10,20$ assuming that half of them are disclosed), and for single and multi-path settings (in the single-path setting, no pseudonyms are required).
The benchmarks were carried out on an Intel Core 2 Duo E8400 CPU with $3.00$GHz and $4$GB of RAM, running Ubuntu 24.04.2 LTS and Python 3.12.3, using the \texttt{mcl 3.00}\footnote{\url{https://github.com/herumi/mcl}} library for pairing-based crypto.
To compensate for the relatively weak hardware of the setup, the figure also includes estimates for an AWS Linux 2 8-core Intel Xeon Platinum 8259CL CPU running at $2.50$GHz using $32$GB of RAM, obtained using the \emph{zkalc estimator}\footnote{\url{https://zka.lc}} using the \texttt{gnark-crypto} implementation.
As a pairing friendly curve, we opted for the BLS12-381 curve, offering about 120bit of security.

The benchmark results are now depicted in \cref{fig:benchmarks}.%
\footnote{The source code of the benchmark implementation can be accessed at \url{https://anonymous.4open.science/r/auditable-qkd-55EF/}.}
The figure only presents the computational costs of the receiver $\Bob$.
The computational costs of all intermediate nodes were well below $20$\,ms and thus in the area of network latency, such that they can be ignored in practice.
As could be expected from \cref{tab:computation_costs}, the receiver's runtime is linear in the number $n$ of nodes, while the number $\ell$ or the costs of pseudonyms only play a secondary role.

\begin{figure}[h!]
	\centering
	 \includegraphics[width=0.9\textwidth]{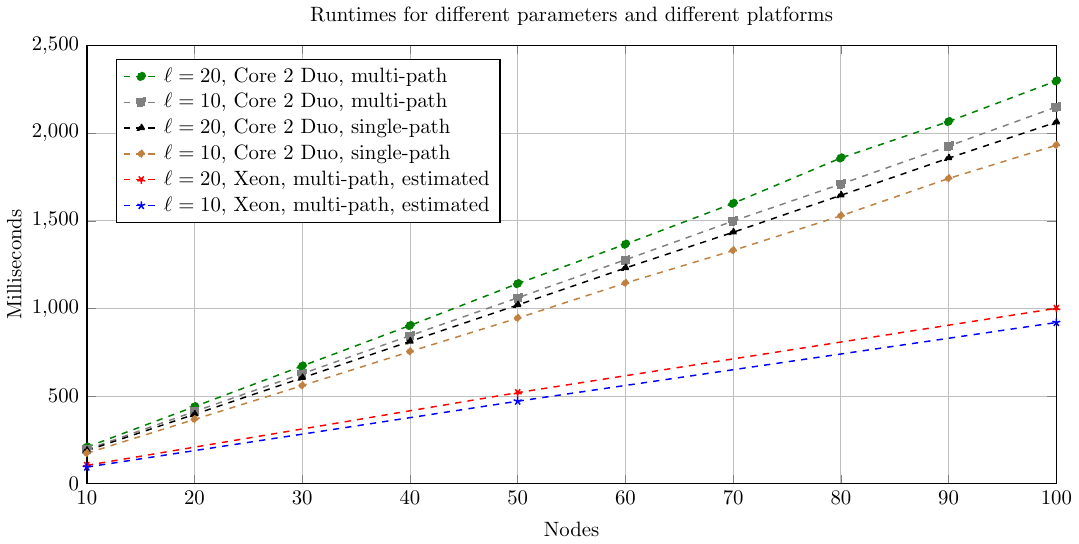} 
	\vspace*{2pt}
	\caption{Runtime evaluation for receiver $\Bob$.}
	\label{fig:benchmarks}
\end{figure}

As can be seen, even for a very high number of $100$ repeater nodes (corresponding, e.g., to $3$ disjoint paths over more than $3'300$\,km considering current distance limitations, thus exceeding the maximum distance between any two EU capitals in mainland Europe, i.e., Porto/Portugal and Helsiki/Finland), the computational overhead is in the area of less than $2.5$\,s and less than $1$\,s on more powerful hardware, both without applying any optimizations like batching or pre-processing.
We consider this overhead acceptable for typical quantum-key distribution deployments, where key material is often generated ahead of use and buffered rather than produced strictly on-demand. 
In practice, QKD keys are commonly generated in batches during periods of link availability (for example, during connectivity windows in satellite or hybrid networks) and stored in a key store or HSM for subsequent consumption. 
As a result, the introduced verification delay generally impacts key availability only at session boundaries and does not translate into added data-path latency or reduced usability for most applications.

\medskip

Finally, \cref{tab:communication_costs} shows the overall bandwidth overhead for the receiver $\Bob$, i.e., the overall size of the received data.
We omit the bandwidth requirements of the repeater nodes, as the size of transferred data grows linearly from the first node to the receiver, such that all intermediate nodes have less bandwidth requirements than $\Bob$.
Note that in the table, following the notation from the protocols, $\pi_{ij}$ also includes the re-randomized signatures $(\hat{R}'',S'',T'')\in\GG_2\times\GG_1^2$ per node.
In the table, we further assume that the policy $\predicate$ requires to disclose $d$ of the $\ell$ attributes, while the other $\ell-d$ attributes remain undisclosed.

\begin{table}[h!]
	\centering
	\begin{tabular}{p{3.5cm} C{2.0cm} C{2.0cm} C{2.0cm} C{2.0cm}}    
		\toprule
		& $\GG_1$ & $\GG_2$ & $\GG_T$ & $\ZZ_p$  \\
		\midrule
		Final $\sigma_{ij}$    & --  & -- & -- & 2\\ 
		Proofs $\pi_{ij}$      & $2\cdot n$  & $n$ & -- & $5\cdot n$\\ 
		Pseudonyms $\nym_{ij}$ & $n$ & -- & -- & -- \\
		\midrule
		Total & $3\cdot n$ & $n$ & -- & $(4+d)\cdot n+2$\\
		\bottomrule
	\end{tabular}
	\vspace*{2pt}
	\caption{Bandwidth overhead for receiver $\Bob$}
	\label{tab:communication_costs}
\end{table}

When estimating the actual bandwidth requirements, we assume a size of $48$\,B for elements in $\GG_1$, $96$\,B for elements in $\GG_2$, and $32$\,B for elements in $\ZZ_p$, which are appropriate values for BLS12-381.
As before, we further assume that half of the attributes are disclosed and half of them remain private.
This then results in an overall communication size for $\Bob$ of about $67$\,kB for $100$ repeaters and $20$ attributes.
Given the substantial classical communication already required by standard QKD post-processing, this additional overhead can be considered negligible in practice. 
For example, in the original (unbiased) BB84 protocol~\cite{C:BenBra84}, basis reconciliation alone requires exchanging basis information for each detection event, which corresponds to a few bits per detection/sifted bit, even before accounting for parameter estimation, error correction, and authentication.

Considering the above estimates, we thus consider the proposed protocol practical for many application scenarios.


\section{Conclusions and Open Challenges}\label{sec:conclusion}
In this work, we presented a protocol to validate the paths over which a QKD transmission has been carried out.
To do so, we formally defined the framework including the desired security properties, and provided a generic construction which we formally proved secure.
Our concrete instantiation shows the practicality of approach, also for long-distance QKD transmissions with a high number of intermediate repeater nodes.


While our main focus is on QKD networks, the general idea of topology-hiding path validation may be useful also in other settings where an operator wants to keep internal routing and infrastructure confidential, but endpoints still require verifiable evidence that traffic complied with externally specified constraints. 
This is complementary to traditional approaches such as source-routing, where the sender fixes (and therefore learns) the end-to-end path; 
in many operational networks routing is hop-by-hop and policy-driven, so requiring the sender to specify, or even know, the route is often unrealistic.
	
Concretely, one could imagine applications in managed, compliance-driven networks such as carrier backbones (e.g., proving traversal only through ``hardened'' equipment from an approved vendor set) or critical-infrastructure interconnects (e.g., power-grid, rail, or emergency-services backhaul, where regulators or customers may require evidence that traffic avoided uncertified nodes or certain geographic regions). 
Related scenarios include confidential content delivery networks (CDNs) (e.g., for proving traffic was handled only by attested gateways or enclaves), and high-assurance multi-path delivery for resilience (e.g., proving that replicated deliveries used disjoint provider resources to reduce correlated failure risk). 
We emphasize that such uses make sense primarily when hop counts are moderate or sessions are long-lived (so linear proof growth can be amortized), and when deployments already support per-hop cryptographic identities or attestations;
within those regimes, our approach offers a practical way to couple routing privacy with receiver-verifiable policy compliance.

Finally, interesting open challenges include achieving index-hiding, where a node does not learn its position in the path.
Furthermore, overcoming the current linear growth of the audit proofs with the path length would be interesting to broaden the applicability of our approach. 
One possible approach is to use recursive SNARKs~\cite{C:KotSet24}, where each node proves in constant-size form that it verified the previous proof and correctly applied the prescribed local extension, yielding proof size and receiver verification cost that are independent of the number of hops. 
A key challenge is to retain practical efficiency under standard deployment constraints and, in particular, to support succinct proofs of multi-path properties such as node-disjointness without reintroducing linear overheads.

\subsubsection*{Acknowledgments.}
This work has received funding from the European Union's Horizon Europe research and innovation program under No. 101114043 (\textsc{Qsnp}), from the Digital European Program under No. 101091588 (\textsc{Quarter}) and No. 101091642 (\textsc{Qci-Cat}), and the National Foundation for Research, Technology and Development.
Views and opinions expressed are however those of the authors only and do not necessarily reflect those of the funding institutions. 
Neither the funding institutions nor the granting authorities can be held responsible for them.

The authors would also like to thank the anonymous reviewers for detailed comments and suggestions on how to improve the presentation and positioning of the result.

  \bibliographystyle{splncs04}


\bibliography{additional,cryptobib/abbrev0,cryptobib/crypto}
\end{document}